\journalname{Journal of Mathematical Biology}
\title{The stationary distribution of a sample from the Wright-Fisher diffusion model with general small mutation rates}
\titlerunning{Wright-Fisher diffusion with small mutation rates}
\author{Conrad J.\ Burden \and Robert C. Griffiths}
\institute{Conrad J.\ Burden, Mathematical Sciences Institute, Australian National University, Canberra, Australia; Research School of Biology, Australian National University, Canberra, Australia
\at \email{conrad.burden@anu.edu.au}
\and
Robert C.\ Griffiths,Department of Statistics, University of Oxford, UK;  Corresponding author
\at
\email{griff@stats.ox.ac.uk}
}
\date{Received: date / Accepted: date}
\begin{document}
\maketitle
\begin{abstract}
The stationary distribution of a sample taken from a Wright-Fisher diffusion with general small mutation rates is found using a coalescent approach. The approximation is equivalent to having at most one mutation in the coalescent tree to the first order in the rates. The sample probabilities characterize an approximation for the stationary distribution from the Wright-Fisher diffusion. The approach is different from \citet{BT2016,BT2017} who use a probability flux argument to obtain the same results from a forward diffusion generator equation.
The solution has interest because the solution is not known when rates are not small. 

An analogous solution is found for the configuration of alleles in a general exchangeable binary coalescent tree. In particular an explicit solution is found for a pure birth process tree when individuals reproduce at rate $\lambda$. 
\keywords {coalescent tree \and small mutation rates  \and Wright-Fisher diffusion}
\subclass{92B99 \and 92D15}
\end{abstract}

\section{Introduction}
\citet{BT2016,BT2017} find an approximation for the stationary distribution in a $d$-allele neutral Wright-Fisher diffusion with low mutation rates. This is interesting to find, particularly since the solution of the stationary distribution for such a diffusion is unknown. For low mutation rates they find that approximately either the population is fixed for one allele type, or lies on a line density with just two types.  

This scenario has biological relevance in the context of the infinite sites model \citep{K1969}.  \cite{Z2010}, for instance, has estimated mutation rates between codons in the Drosophila genome.  Zeng's calculation is based on numerically determining the stationary distribution of a multi-allelic model which assumes segregating sites to have at most two variants if mutation rates are small.  Numerical simulations illustrated in Figs.~5 and 6 of \citet{BT2016} demonstrate that this assumption 
is accurate if the mutation rate $\theta$, as defined below Eq.~(\ref{gen:0}) in the current paper, is $\lesssim 0.01$.  Estimates of mutation rates in small introns 
and fourfold degenerate sites in Drosophila are close to this limit \citep{VC2012}, while mutation rates in vertebrates are generally considered to be somewhat smaller.  
A more extensive discussion of potential biological applications of such models can be found in the Discussion and Conclusions of \citet{BT2017}. 

Burden and Tang's method of solution is analytic and relies on parameterising the general non-reversible rate matrix as the sum of a reversible part and a set of $(d-1)(d-2)/2$ independent terms corresponding to fluxes of probability along closed paths around faces of the simplex.
The model has a connection to boundary processes which take only these two types of states with mutation moving a fixed population to a configuration with two allele types and then no mutation taking place until after fixation occurs again.
\citet{SH2017}, who cite \citet{VB2015,DSK2015}, study a Moran type model
with this boundary mutation property. The stationary distribution is shown to be similar to the approximate solution to a full Moran model with low mutation rates.

The Kingman coalescent process is dual to the Wright-Fisher diffusion in describing the ancestral history of a sample of $n$ individuals in a population back in time. 
In this paper a coalescent approach with mutations in the tree is used to find an approximate sampling formula for small mutation rates. The sampling distributions in the coalescent uniquely determine the stationary distribution in the diffusion process because of duality. Although the focus here is on sampling distributions in the coalescent the approximations give unique expressions for approximations in the stationary distribution of the diffusion. We re-derive a formula of \citet{BT2016,BT2017} using the coalescent. If the mutation rates are of order $\theta$ then it turns out that finding approximate formulae for small rates is equivalent to considering at most one mutation in a coalescent tree.
The method of proof is considerably simpler than the original proofs in \citet{BT2016,BT2017}. A second proof is given using the backward generator of the diffusion process.
The idea of deriving approximate sampling formulae in population genetics models with small mutation rates (or other small parameters such as the inverse of the recombination rate) by considering the number of mutation events in a marked coalescent tree is natural and has been used before, for example in \citet{JS2010,JS2011,BKS2012}.

\citet{BG2018} find an approximation to the stationary density in a two island, two allele model when mutation and migration rates are small by using a flux argument in the Wright-Fisher diffusion process as well as a coalescent argument.

\citet{GT1988} study general binary coalescent trees which have an exchangeable coalescence structure. The method of proof for small mutation rates in the Kingman coalescent tree is extended to general binary coalescent trees in a new Theorem \ref{thm:2}. A general formula for the sample configuration in the ancestral tree of a pure birth process follows from Theorem \ref{thm:2}. 
The formula has an interesting specific form when  individuals in the birth process reproduce independently at rate $\lambda$.

%
%
\section{The Wright-Fisher diffusion}

A model of the relative frequency of $d$ alleles is a Wright-Fisher diffusion process $\{\bm{X}(t)\}_{t\geq 0}$. We consider a neutral model which with total mutation rate $\theta/2$ and a irreducible transition matrix for mutation changes between types $P$, which has a stationary distribution $\bm{\pi}$. The backward generator of the diffusion process is 
\begin{equation}
\mathcal{L}=\frac{1}{2}\sum_{i,j=1}^dx_i(\delta_{ij}-x_j)\frac{\partial^2}{\partial x_i\partial x_j}+\sum_{i,j=1}^d\gamma_{ji}x_j\frac{\partial}{\partial x_i}.
\label{gen:0}
\end{equation}
The mutation rates are parameterized as $\gamma_{ij}=\frac{\theta}{2}(P_{ij}-\delta_{ij})$. There is no loss of generality in taking the total mutation rate $\theta/2$, because it is possible to take the diagonal of $P$ to have non-zero entries, effectively allowing different overall rates away from types.

Let $\bm{n}=(n_1,\ldots,n_d)$ be the number of genes of types $1,\ldots, d$ in a sample of $n$ genes taken from the population.
By a dual process argument the sampling distribution in the stationary distribution of the diffusion of $n$ individuals 
\begin{equation}
p(\bm{n};\theta) = {n\choose \bm{n}}\mathbb{E}\big [X_1^{n_1}\cdots X_d^{n_d}\big ].
\label{dsample:0}	
\end{equation}
is the same as the distribution of alleles in the leaves of a coalescent tree of $n$ individuals. A brief aspect of this duality is the following.  In the stationary distribution of the diffusion process
\begin{equation}
\mathbb{E}\Big [\mathcal{L}{n\choose \bm{n}}X_1^{n_1}\cdots X_d^{n_d}\Big ] = 0.
\label{stat:0}
\end{equation}
In general if a Markov process has a generator ${\cal L}$ then for $f$ in the domain of ${\cal L}$ 
\[
\mathbb{E}\big [{\cal L}f(\bm{X})\big ] = 0,
\]
with expectation in the stationary distribution of the process if it exists. See, for example, \citet{E2011} p46.
A recursive equation for the sampling probabilities is implied from (\ref{stat:0}).
Now
\begin{eqnarray}
&&\mathcal{L}x_1^{n_1}\cdots x_i^{n_i}\cdots x_d^{n_d} =
\nonumber \\
&& \frac{1}{2}\sum_{i=1}^dn_i(n_i-1)x_1^{n_1}\cdots x_i^{n_i-1}\cdots x_d^{n_d}
-\frac{1}{2}\sum_{i,j=1}^dn_i(n_j-\delta_{ij})x_1^{n_1}\cdots x_d^{n_d}
\nonumber\\
\nonumber\\
&&+\frac{\theta}{2}\sum_{i,j=1}^dP_{ji}n_i x_1^{n_1}\cdots x_i^{n_i-1+\delta_{ij}}\cdots x_j^{n_j+1-\delta_{ij}}\cdots  x_d^{n_d}  
-\frac{\theta}{2}\sum_{i=1}^dn_ix_1^{n_1}\cdots x_d^{n_d}.
\nonumber \\
\label{stat:1}
\end{eqnarray}
Simplifying (\ref{stat:1}) and using (\ref{stat:0}) 
\begin{eqnarray}
p(\bm{n};\theta) &=& \frac{n-1}{n-1 +\theta}\sum_{i=1}^n\frac{n_i-1}{n-1}p(\bm{n}-\bm{e}_i;\theta)
\nonumber \\
&&
+ \frac{\theta}{n-1 + \theta}\sum_{i,j=1}^d\frac{n_j+1-\delta_{ij}}{n}P_{ji}p(\bm{n}-\bm{e}_i + \bm{e}_j;\theta).
\label{stat:2}
\end{eqnarray}
The boundary conditions are that $p(\bm{e}_i;\theta) = \pi_i$.
The recursion (\ref{stat:2}) is well known, see, for example \citet{DG2004} Eq.~(3).

An alternative coalescent argument to derive (\ref{stat:2}) is that in a coalescent tree of $n$ where mutations occur on the edges, conditional on the edge lengths, according to a Poisson process of rate $\theta/2$ the probability that the first event back in time was a coalescence is
\[
\frac{{n\choose 2}}{{n\choose 2} + n\theta/2} = \frac{n-1}{n-1+\theta}.
\]
The probability that the first event back in time was a mutation is similarly
\[
\frac{\theta}{n-1+\theta}.
\]
If the event was a coalescence, then the probability of obtaining a configuration of $\bm{n}$ from $\bm{n}-\bm{e}_i$ is $(n_i-1)/(n-1)$. If the event was a mutation, the probability of obtaining a configuration $\bm{n}$ from $\bm{n} - \bm{e}_i + \bm{e}_j$ is
$P_{ji}(n_j+1)/n$ if $i\ne j$ or $P_{ii}n_i/n$ if $i=j$.

Therefore calculating the probability of a sample configuration from (\ref{dsample:0}) is the same as calculating the probability of a configuration of $\bm{n}$ in a coalescent tree.

The emphasis in this paper is finding an expression for $p(\bm{n};\theta)$ when the mutation rate $\theta$ is small. That is, to find a formula
\[
p(\bm{n};\theta) = q_0(\bm{n}) + q_1(\bm{n})\theta+ \mathcal{O}(\theta^2)
\]
as $\theta \to 0$ using a coalescent approach. Then $\bm{n}$ is a configuration of a single allele type, the ancestor in the coalescent tree, if there is no mutation; or a single mutation from the ancestor type to itself. $\bm{n}$ is a configuration of two different allele types if there is a single mutation from the ancestor type to a different type  in the coalescent tree.

A preliminary lemma that is needed is the following, from \citet{GT1988}, Eq. (1.9). The Lemma applies in general exchangeable binary trees where coalescent times $T_2,\ldots, T_n$ have a general distribution.
\begin{lemma}
A particular edge when there are $k$ edges in a general exchangeable binary tree subtends $c$ leaves in the $n$ leaves of a coalescent tree with probability
\begin{equation}
p_{nk}(c) = \frac{{n-c-1\choose k-2}}{{n-1\choose k-1}},\>k \leq n-c+1.
\end{equation}
\end{lemma}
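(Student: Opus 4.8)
The plan is to reduce the statement to a purely combinatorial fact, since only the tree topology enters $p_{nk}(c)$. In a general exchangeable binary tree the labelled tree shape is produced by starting from $n$ singleton lineages and repeatedly coalescing a uniformly chosen unordered pair of the current lineages; the coalescence times $T_2,\dots,T_n$ are attached afterwards and are irrelevant here. Hence the partition of the $n$ leaves into the blocks subtended by the $k$ ancestral lineages has exactly the law it has in Kingman's coalescent, whatever the distribution of the times, and it suffices to study this discrete merging chain.

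The key step is the fact that, with the $k$ lineages put in (exchangeably) random order, the vector $(B_1,\dots,B_k)$ of block sizes is uniform over the $\binom{n-1}{k-1}$ compositions of $n$ into $k$ positive parts. Granting this, the $k$ edges at the level-$k$ epoch are exchangeable, so a particular one subtends $c$ leaves with probability $\mathbb{P}(B_1=c)$, which is the number of such compositions with first part equal to $c$ divided by $\binom{n-1}{k-1}$. Choosing the first part to be $c$ leaves a composition of $n-c$ into $k-1$ positive parts, and there are $\binom{n-c-1}{k-2}$ of those, giving the stated $p_{nk}(c)$. The range $k\le n-c+1$ is precisely the condition $c\le n-k+1$ under which $\binom{n-c-1}{k-2}>0$, i.e.\ that the remaining $k-1$ lineages can each subtend at least one leaf.

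To prove the uniformity I would use downward induction on $k$ (from $k=n$, where $(1,\dots,1)$ is the only composition and has probability $1$). For the step: condition on the level-$(k+1)$ composition being a prescribed uniform composition, perform the merge and re-randomise the labels of the $k$ resulting lineages, and check that every target composition at level $k$ is produced with equal probability by summing over the (source composition, merged pair) combinations that yield it, weighted by their probabilities; this reduces to a short Vandermonde-type identity. A fully self-contained alternative bypasses the induction: count ranked leaf-labelled coalescent histories passing through a partition that has a fixed $c$-set as a block. That count factors into the histories internal to each block, the interleavings of those within-block merges, and the final $k-1$ merges among the $k$ lineages; the powers of $2$ cancel, and the residual sum $\sum_{\rho}\prod_{B\in\rho}|B|!$ over partitions $\rho$ of the other $n-c$ leaves into $k-1$ blocks equals $\frac{(n-c)!}{(k-1)!}\binom{n-c-1}{k-2}$ (it counts partitions into linearly ordered blocks), after which everything collapses to $\binom{n-c-1}{k-2}/\binom{n-1}{k-1}$.

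I expect the uniformity claim (or, on the alternative route, the ranked-history bookkeeping) to be the only real work; the main pitfall is keeping ordered and unordered partitions straight, since that is where the factorial factors can be miscounted. Everything after the uniformity step is elementary.
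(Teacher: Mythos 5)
Your argument is correct, and the reduction to the topology of the merging chain (the times $T_2,\dots,T_n$ being irrelevant) is exactly the right first move for a \emph{general} exchangeable binary tree. The route you then take differs from the paper's. The paper disposes of the lemma in one line by running time forward from the level with $k$ edges to the $n$ leaves and identifying the branching dynamics with a Polya urn containing one special ball and $k-1$ others (each split = draw a ball and return it with another of its colour); the stated $p_{nk}(c)$ is then the classical urn occupancy probability, and the de Finetti integral $p_{nk}(c)=\int_0^1\binom{n-k}{c-1}z^{c-1}(1-z)^{n-c-k+1}(k-1)(1-z)^{k-2}\,dz$ is quoted as the accompanying mixture representation. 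You instead work backward with the coalescent chain and prove that the ordered block-size vector at level $k$ is uniform over the $\binom{n-1}{k-1}$ compositions of $n$ into $k$ positive parts, then marginalise the first coordinate. That uniformity claim is true (it is Kingman's classical computation, and is also precisely what the urn exchangeability delivers), your Lah-number identity $\sum_{\rho}\prod_{B\in\rho}|B|!=\frac{(n-c)!}{(k-1)!}\binom{n-c-1}{k-2}$ is correct, and the range condition matches. The trade-off is that the induction (or ranked-history count) you defer is genuine work that the urn identification packages up for free; conversely your version is self-contained and does not require the reader to recognise the time-reversal of the coalescent as a splitting/urn scheme, which is the one step the paper leaves implicit. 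Either way the combinatorial core is the same uniform-composition law, so no gap -- just be sure to actually carry out the induction step or the history count if you write this up, since that is where all the content lives.
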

This is a Polya urn result identifying $k-1$ edges as balls of one colour and the particular edge as a ball of another colour. Branching in the tree is identified with drawing a ball and replacing it together with another of the same colour. A classical de Finetti representation is
\[
p_{nk}(c) = \int_0^1{n-k\choose c-1}z^{c-1}(1-z)^{n-c-k+1}\times (k-1)(1-z)^{k-2}dz
\]
for which see, for example, \citet{GT2003}, Eq.~(2.4).


The next theorem is an analogous sample version, with a coalescent proof, of the approximation to the stationary sampling density, Eq.~(35) of \citet{BT2017}.  
Certain special cases of this density, corresponding to situations where the stationary distribution of the Wright-Fisher or Moran diffusion is known exactly, 
had been previously published.  The corresponding 2-allele case is quoted in Eq.~(29) of \citet{V2014}, and the case of multi-allelic 
parent-independent rate matrix is given in Eq.~(10) of \citet{RW2010}.  Both of these special cases correspond to reversible rate matrices, for which 
$\pi_a P_{ab} = \pi_b P_{ba}$, leading to a symmetry in Eq.~(\ref{thm:b}).  Importantly, this symmetry is not present for a general rate matrix.

\begin{theorem}\label{thm:1}
The probability of sample configuration in the stationary distribution of the Wright-Fisher diffusion with generator (\ref{gen:0}) is the same as the distribution of a sample configuration in a coalescent tree. As $\theta \to 0$, for $a\ne b \in \{1,\ldots,d\}$, and $n_a+n_b=n$,
\begin{eqnarray}
p(n\bm{e}_a;\theta) &=& \pi_a\Big (1 - \theta(1-P_{aa})\sum_{l=1}^{n-1}\frac{1}{l}\Big ) + \mathcal{O}(\theta^2)
\label{thm:a} \\
&=&\pi_a\Big (1 + 2\gamma_{aa}\sum_{l=1}^{n-1}\frac{1}{l}\Big ) + \mathcal{O}(\theta^2)
\nonumber \\
p(n_a\bm{e}_a+n_b\bm{e}_b;\theta) &=& \theta\Big (\pi_aP_{ab}\frac{1}{n_b} + \pi_bP_{ba}\frac{1}{n_a}\Big ) + \mathcal{O}(\theta^2)
\label{thm:b} \\
p(\bm{n};\theta) &=& \mathcal{O}(\theta^2)\text{~if~}\bm{n}\text{~has~}>2\text{~non-zero~entries}
\label{thm:c}
\end{eqnarray}
\end{theorem}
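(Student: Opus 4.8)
The plan is to compute $p(\bm n;\theta)$ directly from its coalescent representation. Build the Kingman coalescent on the $n$ leaves; it passes successively through levels carrying $k=n,n-1,\ldots,2$ ancestral lineages. Assign the grand MRCA a type drawn from the stationary distribution $\bm\pi$ of $P$, and superimpose mutations on the edges as a Poisson process of rate $\theta/2$ per unit length, each mutation relabelling the whole subtree below it according to $P$. By the duality recalled in (\ref{dsample:0})--(\ref{stat:2}) the resulting leaf-configuration probability is exactly $p(\bm n;\theta)$, with boundary value $\pi_a$ when $n=1$. The first step is to observe that the chance of two or more mutations falling on the tree is $\mathcal O(\theta^2)$: conditional on the total tree length $L=\sum_{k=2}^n kT_k$ the mutation count is Poisson with mean $(\theta/2)L$, and $\mathbb E[L]=\sum_{k=2}^n k/\binom{k}{2}=2\sum_{l=1}^{n-1}1/l$, as well as $\mathbb E[L^2]$, are finite. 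Hence to first order in $\theta$ at most one mutation occurs, and a tree carrying at most one mutation exhibits at most two distinct leaf types (the ancestral type everywhere above the mutated edge, and one relabelled type below it). This already gives (\ref{thm:c}).

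For (\ref{thm:a}) with $n\ge 2$, all $n$ leaves carry type $a$ exactly when the MRCA has type $a$ and either no mutation occurs or the unique mutation is the silent transition $a\to a$; an MRCA of a different type is impossible at first order, since a single mutation lies strictly below the MRCA and so relabels only a proper subtree. Because the MRCA type is chosen independently of the realised mutation pattern, this probability equals $\pi_a\bigl(\mathbb P(0\text{ mutations})+P_{aa}\,\mathbb P(\text{exactly }1\text{ mutation})\bigr)+\mathcal O(\theta^2)$. The no-mutation probability factorises over levels as $\prod_{k=2}^n (k-1)/(k-1+\theta)$, since at level $k$ the next event is a coalescence rather than a mutation with probability $\binom{k}{2}/(\binom{k}{2}+k\theta/2)=(k-1)/(k-1+\theta)$; equivalently it is $\mathbb E[e^{-(\theta/2)L}]$. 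Either way it expands to $1-\theta\sum_{l=1}^{n-1}1/l+\mathcal O(\theta^2)$, so $\mathbb P(\text{exactly }1\text{ mutation})=\theta\sum_{l=1}^{n-1}1/l+\mathcal O(\theta^2)$. Substituting and using $\theta(1-P_{aa})=-2\gamma_{aa}$ gives both lines of (\ref{thm:a}).

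For (\ref{thm:b}) with $n_a,n_b\ge 1$ and $n_a+n_b=n$, the configuration $n_a\bm e_a+n_b\bm e_b$ arises, up to $\mathcal O(\theta^2)$, in exactly two disjoint ways: the MRCA has type $b$ and a single $b\to a$ mutation sits on an edge subtending exactly $n_a$ leaves, or the MRCA has type $a$ and a single $a\to b$ mutation sits on an edge subtending exactly $n_b$ leaves. The missing ingredient is the law of the number of leaves subtended by the (unique) mutated edge. Conditioning on the level $k$ at which the mutation falls, $\mathbb P(\text{exactly one mutation on the tree, located at level }k)=\theta/(k-1)+\mathcal O(\theta^2)$ from the waiting-time structure at level $k$ ($\mathbb E[kT_k]=2/(k-1)$), and given this the mutated lineage is uniform among the $k$ exchangeable lineages, hence subtends $c$ leaves with probability $p_{nk}(c)$ by the preliminary Lemma. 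Summing over $k$ and using the de~Finetti representation of $p_{nk}(c)$ quoted above together with the hockey-stick identity $\sum_{m=0}^{n-2}\binom{m}{c-1}=\binom{n-1}{c}$ yields $\sum_k p_{nk}(c)/(k-1)=1/c$, so $\mathbb P(\text{exactly one mutation, on an edge subtending }c\text{ leaves})=\theta/c+\mathcal O(\theta^2)$ for $1\le c\le n-1$. Multiplying by the independent probabilities $\pi_b$ and $P_{ba}$ (resp.\ $\pi_a$ and $P_{ab}$) for the two scenarios and adding gives $p(n_a\bm e_a+n_b\bm e_b;\theta)=\theta(\pi_bP_{ba}/n_a+\pi_aP_{ab}/n_b)+\mathcal O(\theta^2)$, which is (\ref{thm:b}).

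I expect the main obstacle to be the error bookkeeping and the small combinatorial identity in the last step: one must check that replacing ``at least one mutation'' by ``exactly one mutation'' perturbs each probability only by $\mathcal O(\theta^2)$, that the mutated lineage is genuinely exchangeably chosen so the Lemma applies, and that $\sum_k p_{nk}(c)/(k-1)=1/c$. Everything else follows routinely once the coalescent construction and the $\mathcal O(\theta^2)$ bound on multiple mutations are in hand. As an independent check one can substitute the ansatz $p(\bm n;\theta)=q_0(\bm n)+q_1(\bm n)\theta+\mathcal O(\theta^2)$ into the recursion (\ref{stat:2}) and match powers of $\theta$: the order-$\theta^0$ equations force $q_0(n\bm e_a)=\pi_a$ and $q_0\equiv 0$ on configurations with at least two types, and the order-$\theta^1$ equations then reproduce (\ref{thm:a})--(\ref{thm:c}); this is essentially the alternative proof via the backward generator.
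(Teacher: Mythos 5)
Your proposal is correct and follows essentially the same route as the paper's coalescent proof: reduce to at most one Poisson mutation on the tree, use the subtended-leaf law $p_{nk}(c)$ from the preliminary Lemma, and evaluate $\sum_k p_{nk}(c)/(k-1)=1/c$ (the paper computes the identical sum, via a generating function rather than the hockey-stick identity). The generator/recursion check you sketch at the end is precisely the paper's second, induction-based proof.
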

\begin{proof}
Let $T_2, \ldots, T_n$ be the times while $2,3,\ldots,n$ ancestors in a coalescent tree of $n$ individuals. These are independent exponential random variables with rates ${2\choose 2},\ldots, {n\choose 2}$.
The probability that a sample of $n$ is monomorphic of type $a$, and there are no mutations in the tree is
\begin{eqnarray}
\pi_a\mathbb{E}\Big [\exp \Big \{-\frac{\theta}{2}\sum_{l=2}^nlT_l \Big \}\Big ]
&=& \pi_a\Big ( 1 - \frac{1}{2}\theta\sum_{l=2}^{n}l\mathbb{E}\big [T_l\big ]\Big ) + \mathcal{O}(\theta^2)
\nonumber \\
&=& \pi_a\Big ( 1 - \theta\sum_{l=1}^{n-1}\frac{1}{l}\Big ) + \mathcal{O}(\theta^2).
\label{thm:aa}
\end{eqnarray}
Mutations occur according to a Poisson process along the edges of the tree, conditional on the coalescence times $T_n,\ldots, T_2$.  The conditional probability of a single mutation occurring while $k$ edges in the tree is therefore
\[
\frac{\theta}{2}kT_ke^{-\frac{\theta}{2}\sum_{l=2}^nlT_l}.
\]
The probability that a sample of $n$ is monomorphic of type $a$ and one mutation from a type $a$ allele to the same type in the tree is
\begin{eqnarray}
\pi_aP_{aa}\frac{\theta}{2}\sum_{l=2}^n
 l\mathbb{E}\big [T_le^{-\frac{\theta}{2}\sum_{l=2}^nlT_l}\big ]
  &=& 
\pi_aP_{aa}\frac{\theta}{2}\sum_{l=2}^nl\mathbb{E}\big [T_l\big ] + \mathcal{O}(\theta^2)
\nonumber \\
&=& \theta\pi_aP_{aa}\sum_{l=1}^{n-1}\frac{1}{l} + \mathcal{O}(\theta^2)
\label{thm:ab}
\end{eqnarray}
Adding (\ref{thm:aa}) and (\ref{thm:ab}) gives the probability of a monomorphic configuration (\ref{thm:a}).

The probability of a configuration $n_a\bm{e}_a+n_b\bm{e}_b$, $a,b\in [d]$ is now calculated. This configuration can only be obtained to order $\theta$ if there is one mutation in the tree. 
The probability that the ancestor is of type $a$ and 
 a single mutation occurs while $k$ edges giving rise to a type $b$ individual conditional on the coalescent times is
\[
\pi_aP_{ab}\frac{\theta}{2}kT_k\exp \Big \{-\frac{\theta}{2}\sum_{l=2}^nlT_l\Big \}.
\]
The total probability of a configuration $n_a\bm{e}_a + n_b\bm{e}_b$ with a type $a$ ancestor is then
\begin{eqnarray}
&&\frac{\theta}{2}\pi_aP_{ab}\sum_{k=2}^{n_a+1}\mathbb{E}\big [T_k]kp_{nk}(n_b) + \mathcal{O}(\theta^2)
\nonumber \\ 
&&~=\theta\pi_aP_{ab}\frac{(n_a-1)!(n_b-1)!}{(n-1)!}
\sum_{k=2}^{n_a+1}{n-k \choose n_b-1} + \mathcal{O}(\theta^2).
\label{thm:ba}	
\end{eqnarray}
The sum in (\ref{thm:ba}) is the coefficient of $z^{n_b-1}$ in 
\[
\sum_{k=2}^{n_a+1}(1+z)^{n-k}=\sum_{l=n_b-1}^{n-2}(1+z)^l
= (1+z)^{n_b-1}\frac{(1+z)^{n_a}-1}{z},
\]
that is ${n-1\choose n_b}$. The total probability of the event is then
\[
\theta\pi_aP_{ab}\frac{(n_a-1)!(n_b-1)!}{(n-1)!}\cdot \frac{(n-1)!}{n_b!(n_a-1)!} + \mathcal{O}(\theta^2) = \theta\pi_aP_{ab}\frac{1}{n_b} + \mathcal{O}(\theta^2).
\]
Similarly if the ancestor is of type $b$ the probability is 
\[
\theta\pi_bP_{ba}\frac{1}{n_a} + \mathcal{O}(\theta^2).
\]
 The full probability of the configuration taking into account whether the ancestor is type $a$ or $b$ is then (\ref{thm:b}).
\qed
\end{proof}
A second induction proof can be made using the backward generator (\ref{gen:0}).

\begin{proof}

We want to prove that in the stationary distribution of the diffusion process with generator (\ref{gen:0}) that when $\theta \to 0$ the sample probabilities satisfy
\begin{eqnarray}
\mathbb{E}\big [X_a^{n}] &=& 
\pi_a\Big (1 - \theta(1-P_{aa})\sum_{l=1}^{n-1}\frac{1}{l}\Big ) + \mathcal{O}(\theta^2)
\label{g:a}\\
{n \choose n_a}\mathbb{E}\big [X_a^{n_a}X_b^{n_b}\big ] &=& 
\theta\pi_aP_{ab}\frac{1}{n_b}+\theta\pi_bP_{ba}\frac{1}{n_a} + \mathcal{O}(\theta^2)
\label{g:ab}
\end{eqnarray}
 For $n>1$
\begin{eqnarray}
0 &=& \mathbb{E}\big [{\cal L}X_a^{n}\big ]	
\nonumber \\
&=& \frac{1}{2}n(n-1)\mathbb{E}\big [X_a^{n-1}- X_a^{n}\big ]+n\frac{\theta}{2}\sum_{j=1}^d(P_{ja}-\delta_{ja})\mathbb{E}\big [X_a^{n-1}X_j\big ].
\label{temp:20}
\end{eqnarray}
$\mathbb{E}\big [X_a\big ] = \pi_a$ and from (\ref{temp:20})
$\mathbb{E}\big [X_a^n\big ] - \mathbb{E}\big [X_a^{n-1}\big ] = \mathcal{O}(\theta)$, so by recurrence 
 $\mathbb{E}\big [X_a^n\big ] = \pi_a + \mathcal{O}(\theta)$.
An induction proof now shows that (\ref{g:a}) is true. This is a refinement making calculations to $\mathcal{O}(\theta^2)$. The formula is correct for $n=n_a=1$ since $\mathbb{E}\big [X_a\big ] = \pi_a$. Assume that (\ref{g:a}) is true for $n$ replaced by $n-1$.
Simplifying (\ref{temp:20}) gives
\begin{equation}
\big (n-1+\theta(1-P_{aa})\big )\mathbb{E}\big [X_a^n\big ]
= (n-1)\mathbb{E}\big [X_a^{n-1}\big ]
+ \theta \sum_{j\ne a}P_{ja}\mathbb{E}\big [X_a^{n-1}X_j\big ].
\label{temp:10}
\end{equation}
The last term in (\ref{temp:10}) is $\mathcal{O}(\theta^2)$ because 
\begin{eqnarray*}
\theta \sum_{j\ne a}P_{ja}\mathbb{E}\big [X_a^{n-1}X_j\big ]
&\leq& \theta\sum_{j\ne a}\mathbb{E}\big [X_aX_j\big ]
\nonumber \\
&=&\theta\big (\mathbb{E}\big [X_a] - \mathbb{E}\big [X_a^2\big ]\big )
\nonumber \\
&=& \theta(\pi_a-\pi_a + \mathcal{O}(\theta)) = \mathcal{O}(\theta^2)
\end{eqnarray*}
This estimate is valid even if $n=2$ since it is known at this stage of the proof that 
$\mathbb{E}\big [X_a^2\big ] = \pi_a+\mathcal{O}(\theta)$.
Now from (\ref{temp:10}) and the induction hypothesis
\begin{eqnarray*}
\mathbb{E}\big[X_a^n\big ] &=& \frac{n-1}{n-1+ \theta(1-P_{aa})}\mathbb{E}\big[X_a^{n-1}\big ] + \mathcal{O}(\theta^2)
\nonumber \\
&=& \frac{n-1}{n-1+\theta(1-P_{aa})}\pi_a\Big (1 - \theta(1-P_{aa})\sum_{l=1}^{n-2}\frac{1}{l}\Big ) + \mathcal{O}(\theta^2)
\nonumber \\
&=&
\pi_a\Big (1 - \theta(1-P_{aa})\sum_{l=1}^{n-1}\frac{1}{l}\Big ) + \mathcal{O}(\theta^2)
\end{eqnarray*}
and the induction is completed.

For (\ref{g:ab}) consider for $n_a\geq 1,n_b \geq 1, n \geq 2$
\begin{eqnarray}
0&=& \mathbb{E}\big [{\cal L}X_a^{n_a}X_b^{n_b}\big ]
\nonumber \\
&=& \frac{1}{2}n_a(n_a-1)\mathbb{E}\big [X_a^{n_a-1}X_b^{n_b}\big ]
+ \frac{1}{2}n_b(n_b-1)\mathbb{E}\big [X_a^{n_a}X_b^{n_b-1}\big ]
\nonumber \\
&&~~-\frac{1}{2}n(n-1)\mathbb{E}\big [X_a^{n_a}X_b^{n_b}\big ]
\nonumber \\
&&~~+ n_a\sum_{j = 1}^d \gamma_{ja} \mathbb{E}\big [X_j X_a^{n_a - 1}X_b^{n_b}\big ]
	+ n_b\sum_{j = 1}^d \gamma_{jb} \mathbb{E}\big [X_j X_a^{n_a}X_b^{n_b - 1}\big ].  \label{LofXaXb}
\end{eqnarray}

The proof proceeds by induction.  Consider first the case $n_a = n_b = 1$.  In this case the first two terms in Eq.~(\ref{LofXaXb}) do not contribute, and 
\begin{eqnarray*}
\mathbb{E}\big [X_aX_b\big ] & = & \frac{\theta}{2}\Bigg (\sum_{j=1}^d(P_{ja}-\delta_{ja})\mathbb{E}\big [X_jX_b\big ]
+ \sum_{j=1}^d(P_{jb}-\delta_{jb})\mathbb{E}\big [X_jX_a\big ]\Bigg )
\nonumber \\
&=&\frac{\theta}{2}\Big ( P_{ba}\mathbb{E}\big [X_b^2\big ]
+P_{ab}\mathbb{E}\big [X_a^2\big ]\Bigg ) + \mathcal{O}(\theta^2)
\nonumber \\
&=& \frac{\theta}{2}\Big ( P_{ba}\pi_b+P_{ab}\pi_a\Big ) + \mathcal{O}(\theta^2),
\label{start:0}
\end{eqnarray*}
which establishes Eq.~(\ref{g:ab}) for $n_a = n_b = 1$.  

Now suppose $n_a = 1$, and carry out an induction on $n_b > 1$.  From Eq.~(\ref{LofXaXb}), 
\begin{eqnarray*}
\lefteqn{\frac{1}{2}n(n-1)\mathbb{E}\big [X_aX_b^{n_b}\big ] } \\
      & = &\frac{1}{2}n_b(n_b-1)\mathbb{E}\big [X_a X_b^{n_b-1}\big ] \\
		& & + \frac{\theta}{2} \left( \sum_{j = 1}^d (P_{ja}-\delta_{ja}) \mathbb{E}\big [X_j X_b^{n_b}\big ] + 
					n_b \sum_{j = 1}^d (P_{jb}-\delta_{jb}) \mathbb{E}\big [X_j X_a X_b^{n_b - 1}\big ] \right) \\
      & = &\frac{1}{2}n_b(n_b-1)\mathbb{E}\big [X_a X_b^{n_b-1}\big ] + \frac{\theta}{2} P_{ba}\pi_b + \mathcal{O}(\theta^2).  
\end{eqnarray*}
Assume that (\ref{g:ab}) holds for $n_a = 1$, $n_b$ replaced by $n_b - 1$, and $n$ replaced by $n - 1 = n_b$, that is, 
$$
n_b \mathbb{E}\big [X_a X_b^{n_b-1}\big ] = \theta \pi_a P_{ab} \frac{1}{n_b - 1} + \theta \pi_b P_{ba} + \mathcal{O}(\theta^2).  
$$
Then 
\begin{eqnarray*}
\lefteqn{{n \choose 1} \mathbb{E}\big [X_aX_b^{n_b}\big ]} \\
 & = & \frac{2}{n_b} \left[ \frac{1}{2}(n_b-1) \left(\theta \pi_a P_{ab} \frac{1}{n_b - 1} + \theta \pi_b P_{ba} \right) 
		+ \frac{\theta}{2} P_{ba}\pi_b \right] + \mathcal{O}(\theta^2) \\
 & = & \theta \pi_a P_{ab} \frac{1}{n_b} + \theta \pi_b P_{ba} + \mathcal{O}(\theta^2).  
\end{eqnarray*} 
This establishes (\ref{g:ab}) for $n_a = 1$ and $n_b > 1$.  The cases $n_a > 1$ and $n_b = 1$ follow by symmetry.

Finally, for both $n_a, n_b > 1$, Eq.~(\ref{LofXaXb}) rearranges to give  
\begin{eqnarray}
{n\choose n_a}\mathbb{E}\big [X_a^{n_a}X_b^{n_b}\big ]
&=& \frac{n_a-1}{	n-1}{n-1\choose n_a-1}\mathbb{E}\big [X_a^{n_a-1}X_b^{n_b}\big ]
\nonumber \\
&&~~ +\frac{n_b-1}{	n-1}{n-1\choose n_b-1}\mathbb{E}\big [X_a^{n_a}X_b^{n_b-1}\big ] + \mathcal{O}(\theta^2).  
\label{recurse:0}
\end{eqnarray}
Assume (\ref{g:ab}) is true up to $n-1$. Then
\begin{eqnarray}
{n\choose n_a}\mathbb{E}\big [X_a^{n_a}X_b^{n_b}\big ] &=&
\theta\cdot \frac{n_a-1}{n-1}	\Bigg (\pi_bP_{ab}\frac{1}{n_b}+\pi_bP_{ba}\frac{1}{n_a-1}\Bigg )
\nonumber \\
&&~+
\theta\cdot \frac{n_b-1}{n-1}	\Bigg (\pi_bP_{ab}\frac{1}{n_b-1}+\pi_bP_{ba}\frac{1}{n_a}\Bigg )
\nonumber \\
&=& \theta\Bigg (\pi_bP_{ab}\frac{1}{n_b}+\pi_bP_{ba}\frac{1}{n_a}\Bigg ).
\end{eqnarray}
The induction proof is now complete.
\qed

\end{proof}
The stationary density in the population $f(\bm{x};\theta)$ is singular as $\theta \to 0$. There is a probability $\pi_a + \mathcal{O}(\theta)$ that the population is fixed for type $a$. An assumption needs to be made that the diffusion is an approximation to a discrete Wright-Fisher model of population size $N$ and $\theta\log (N)<<1$ in (\ref{thm:a}). 
There are at most two types $a$ and $b$ to $\mathcal{O}(\theta)$ and the density of $X_a$ and $X_b=1-X_a$ is proportional to
\begin{equation}
\theta\big (\pi_aP_{ab}\frac{1}{1-x_a} + \pi_bP_{ba} \frac{1}{x_a}\big ),
\label{density:20}
\end{equation}
with $1/N < x_a < 1-1/N$, found by \citet{BT2016}. Alternatively (\ref{density:20}) is an approximation that holds for 
$\theta|\log x_a + \log (1-x_a)| << 1$.
An important step in their flux argument is that the $d\times d$ rate matrix $Q=\frac{1}{2}\theta(P-I)$ can be decomposed as $Q=Q^{\text{GTR}}+Q^{\text{flux}}$ where $Q^{\text{GTR}}= (C_{ij})$ is a general time-reversible rate matrix and $Q^{\text{flux}}= (\Phi_{ij})$ is a matrix with elements satisfying $\Phi_{ij}=-\Phi_{ji}$.
Their approximate density has the following form (Eq. (42), (2016) paper and Eq. (21), (2017) paper)
\[
(C_{ab}-\Phi_{ab})\frac{1}{x}+(C_{ab}+\Phi_{ab})\frac{1}{1-x},
\]
which turns out to be proportional to (\ref{density:20}). The probability that the population is monomorphic is Eq. (45) in \citet{BT2016}.
If $P$ is reversible then the density (\ref{density:20}) is proportional to
\[
\frac{1}{x_a(1-x_a)},
\]
which is similar to the speed measure in a two-allele model with no mutation.

Theorem \ref{thm:1} also provides a proof of the approximate density (\ref{density:20}).
\begin{theorem}
The stationary density for pairs of frequencies $X_a$ and $X_b=1-X_a$, in the Wright-Fisher diffusion with generator (\ref{gen:0}) when $\theta \to 0$ is (\ref{density:20}) to $\mathcal{O}(\theta)$.
\end{theorem}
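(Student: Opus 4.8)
The plan is to read off the continuous part of the stationary law of $X_a$ by inverting the sampling moments already supplied by Theorem~\ref{thm:1}, so that this theorem becomes essentially a corollary. Write $\mu_\theta$ for the law of $X_a$ on $[0,1]$ and recall $X_b=1-X_a$, so that by (\ref{dsample:0}) the sampling probability $p(n_a\bm{e}_a+n_b\bm{e}_b;\theta)$ with $n_a+n_b=n$ equals $\binom{n}{n_a}\mathbb{E}\big[X_a^{n_a}(1-X_a)^{n_b}\big]$. For $n_a,n_b\ge1$ the monomial $x^{n_a}(1-x)^{n_b}$ vanishes at both endpoints, so these ``interior'' moments are insensitive to any mass of $\mu_\theta$ at $0$ or $1$ and equal $\int_{(0,1)}x^{n_a}(1-x)^{n_b}\,\mu_\theta(dx)$; by (\ref{thm:b}) they are known to $\mathcal{O}(\theta^2)$. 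First I would verify that the candidate density (\ref{density:20}) reproduces exactly these moment values.

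Using the Beta integral $\int_0^1 x^{p-1}(1-x)^{q-1}\,dx=(p-1)!(q-1)!/(p+q-1)!$ for positive integers $p,q$, one checks that for $n_a+n_b=n$
\[
\binom{n}{n_a}\int_0^1 x^{n_a}(1-x)^{n_b}\,\frac{dx}{1-x}=\frac{1}{n_b},\qquad
\binom{n}{n_a}\int_0^1 x^{n_a}(1-x)^{n_b}\,\frac{dx}{x}=\frac{1}{n_a}.
\]
Hence the function $h_\theta(x)=\theta\big(\pi_aP_{ab}\,\tfrac{1}{1-x}+\pi_bP_{ba}\,\tfrac{1}{x}\big)$ of (\ref{density:20}) satisfies $\binom{n}{n_a}\int_0^1 x^{n_a}(1-x)^{n_b}h_\theta(x)\,dx=\theta\big(\pi_aP_{ab}/n_b+\pi_bP_{ba}/n_a\big)$, which is exactly the right-hand side of (\ref{thm:b}). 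Consequently the signed measure $\lambda_\theta(dx):=\mathbf{1}_{(0,1)}(x)\,\mu_\theta(dx)-h_\theta(x)\,dx$ has $\int x^{n_a}(1-x)^{n_b}\,\lambda_\theta(dx)=\mathcal{O}(\theta^2)$ for all $n_a,n_b\ge1$.

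It then remains to pass from ``all interior moments are $\mathcal{O}(\theta^2)$'' to ``$\lambda_\theta$ is $\mathcal{O}(\theta^2)$''. The clean device is to work with $x(1-x)\lambda_\theta(dx)$: this is a finite signed measure on the compact interval $[0,1]$ (the factor $x(1-x)$ kills the endpoint atoms of $\mu_\theta$ and turns $x(1-x)h_\theta(x)$ into the bounded polynomial $\theta(\pi_aP_{ab}x+\pi_bP_{ba}(1-x))$), and its ordinary moments $\int_0^1 x^k\,\big(x(1-x)\lambda_\theta\big)(dx)$ are the interior moments of $\lambda_\theta$ with $n_a=k+1$ and $n_b=1$, hence $\mathcal{O}(\theta^2)$ for every $k\ge0$. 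Since a finite measure on a bounded interval is determined by its moment sequence (Hausdorff/Weierstrass), this pins down the density of $\mu_\theta$ on $(0,1)$ as (\ref{density:20}) up to an $\mathcal{O}(\theta^2)$ correction, and the residual atoms follow from the $n_b=0$ moments (\ref{thm:a}), giving mass $\pi_a+\mathcal{O}(\theta\log N)$ at $x_a=1$ and $\pi_b+\mathcal{O}(\theta\log N)$ at $x_a=0$. The hard part is precisely this last step: the $\mathcal{O}(\theta)$ cannot be uniform up to the boundary because $h_\theta$ is not integrable on $(0,1)$, so the conclusion must be read on a sub-interval $1/N<x_a<1-1/N$ (equivalently under $\theta\log N\ll1$, as already noted after (\ref{density:20})), or in the moment/weak sense that every polynomial test-function integral over $(0,1)$ agrees with the one computed from (\ref{density:20}) to order $\theta$. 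Phrased through the bounded auxiliary measure $x(1-x)\mu_\theta$, however, the only genuine inputs are the Beta-integral bookkeeping above and the determinacy of the Hausdorff moment problem, everything else being Theorem~\ref{thm:1}.
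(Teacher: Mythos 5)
Your proposal is correct and follows essentially the same route as the paper: multiply by $x_a(1-x_a)$ to obtain an integrable (moment-determined) object, check via the Beta integral that the candidate density (\ref{density:20}) reproduces the interior sampling moments of Theorem~\ref{thm:1}, and invoke uniqueness of the moment sequence. You simply make explicit the Hausdorff-moment-determinacy step and the boundary caveat that the paper states in one line, which is a welcome but not substantively different elaboration.
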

\begin{proof}
Denote 	the density  Eq.~(\ref{density:20}) by $f_{ab}(x_a)$.
Although $f_{ab}(x_a)$ is not integrable over $(0,1)$, $x_a(1-x_a)f_{ab}(x_a)$ is integrable and sampling distributions
uniquely determine $x_a(1-x_a)f_{ab}(x_a)$. Therefore it is sufficient to note the easy calculation that, to $\mathcal{O}(\theta)$, for $n_a,n_b\geq 1$
\begin{equation*}
\int_0^1{n\choose n_a}x^{n_a}(1-x)^{n_b}f_{ab}(x_a) dx_a
= \theta\Big (\pi_aP_{ab}\frac{1}{n_b} + \pi_bP_{ba}\frac{1}{n_a}\Big ).
\end{equation*}
\qed
\end{proof}

Theorem \ref{thm:1} is derived for fixed $n$, so the terms in (\ref{thm:a}) and (\ref{thm:b}) 
of $\mathcal{O}(\theta^2)$ are not necessarily small as $n \to \infty$. The pgf of the number of mutations in a sample of $n$ is 
\[\prod_{j=1}^{n-1}\Big (1 - \frac{\theta}{j}(s-1)\Big )^{-1}\]
 which is asymptotic to a Poisson pgf $e^{\theta \log (n) (s-1)}$.
 The probability of greater than one mutation in a coalescent tree is therefore $\mathcal{O}(\theta^2\log (n)^2)$ as $\theta \to 0$ and $n \to \infty$. The formulae (\ref{thm:a}) and (\ref{thm:b}) really then hold provided $\theta\log (n)$ is small. 

\citet{JS2011} in Lemma 1 derive an exact formula, when $\theta$ is not necessarily small, for the probability that there is one mutation from $a$ to $b$ in a coalescent tree resulting in a sample configuration $\bm{n}=n_a\bm{e}_a + n_b\bm{e}_b$.  This probability is
\begin{equation}
\theta P_{ab}\frac{(n-1)!}{~~(1+\theta)_{(n-1)}}
\sum_{l=1}^{n_a}
\frac{
{n_a-1\choose l-1}
}
{
{n-1 \choose l}
}
\frac{1}{l+\theta}.
\label{Song:0}	
\end{equation}
An expansion of (\ref{Song:0}) to the first order in $\theta$ is
\[
\theta P_{ab}\frac{1}{n_b} + \mathcal{O}(\theta^2),
\]
which agrees with (\ref{thm:b}).
\citet{BKS2012} derive formulae for the leading coefficients in $p(\bm{n})$ of powers of $\theta$ depending on the number of mutations in a tree.  If there are $j$ mutations then the leading term is $\mathcal{O}(\theta^j)$. Eq. (\ref{thm:b}) is contained in their Theorem 1.
\begin{remark}
A Moran model in continuous time has a fixed population size of $N$ genes. The behaviour of the population frequencies in the stationary distribution of this model turns out to be the same as the stationary distribution of $N$ genes in a Wright-Fisher diffusion process. Thus Theorem \ref{thm:1} holds for the population frequencies taking $n=N$. Birth-death events in the model occur at rate $\lambda$ when an individual is chosen at random to reproduce a child (and continue as a parent) and an individual is chosen to die from the individuals before reproduction. (This could be the parent.) Mutations occur according to a Poisson process at rate $\theta/2$ along the edges of ancestral trees, independently from reproduction. The distribution of the ancestral tree of $n$ genes is the same as a coalescent tree if the time scale is $\lambda = N(N-1)/2$.  The limit relative frequencies in a discrete Wright-Fisher model and a Moran model with this rate form a diffusion process with generator (\ref{gen:0}).
\end{remark}
\begin{example} A stepwise mutation model has $d$ allele types, with $P_{i,i+1} = \alpha, P_{i,i-1} = 1 - \alpha$, $P_{ij} = 0$ if $|i-j| >1$, where indices are read around a circle. The stationary distribution is $\pi_i=1/d$, $i=1,\ldots, d$.  Then for small $\theta$ the probability a sample contains adjacent types $n_i,n_{i+1}$ is
\[
\frac{\theta}{d}\Bigg ( \frac{\alpha}{n_{i+1}} + \frac{1-\alpha}{n_i}\Bigg ) + \mathcal{O}(\theta^2)
\]
and the probability of a sample configuration that has non-adjacent types is 
$\mathcal{O}(\theta^2)$. $d$ is usually thought of as large in this model.
\end{example}
\begin{example}
A model of $L$ completely linked sites has $K$ types at each site with $d=K^L$ and mutation matrix
\[
P  = \frac{1}{L}\Big ( M\otimes I \cdots \otimes I + I\otimes M \cdots \otimes I +
\cdots +I \otimes \cdots \otimes M\Big ),
\]
where $M$ is a $K\times K$ transition probability matrix for mutations at a site, with stationary distribution $\bm{\pi}^{(M)}$ and $\otimes$ denotes direct product. $\theta/2$ is the mutation rate per sequence, or $\theta/2L$ per site. The stationary distribution of $P$ is $\bm{\pi}^{(M)}\otimes \cdots \otimes \bm{\pi}^{(M)}$, which gives the probability of a fixed configuration in a sample of $n$ to constant order. The only configurations which have a probability of $\mathcal{O}(\theta)$ are those with two types which differ at just one site, say site $r$ with types $a,b$. Then
the sequences are
\begin{eqnarray*}
i &=& i_1,\ldots,a,i_{r+1},\ldots,i_L\phantom{.}
\nonumber \\
j &=& i_1,\ldots,b,i_{r+1},\ldots,i_L.
\end{eqnarray*}
The probability of a configuration $i,j$ is then
\[
\frac{\theta}{2}\prod_{k\ne r}\pi_{i_k}^{(M)}
\Bigg ( \frac{\pi_a^{(M)}M_{ab}}{n_j} +  \frac{\pi_b^{(M)}M_{ba}}{n_i} \Bigg ) + \mathcal{O}(\theta^2)
\]

\end{example}

Inspection of the proof of Theorem \ref{thm:1} shows that a similar theorem holds for general exchangeable binary trees. In such trees if there are $j$ edges, the probability of coalescence of a particular pair is ${j\choose 2}^{-1}$. 
The combinatorial nature of these trees is the same as in a coalescent tree, but the edge lengths $T_2,\ldots, T_n$ have a general distribution. The edge lengths do not need to be independent. 
Examples of general binary trees are in \citet{GT1988, GT2003}.

 A coalescent tree with a non-homogeneous population size is an example of a general binary tree. Suppose the population size is  $N(t)= N_0/\nu (t)$ at time $t$ back.
Denote $S^\nu_j= T^\nu_j + \cdots + T^\nu_n$, with $S^\nu_{n+1}=0$.
$\{S^\nu_j\}$ form a reverse Markov Process with transition density
of $S^\nu_j$ given $S^\nu_{j+1}=t$ of
\[f(s;t)
= {j\choose 2}\nu(s)
\exp\Bigl (-{j\choose 2}\int_t^s \nu (u)du\Bigr ), s>t.
\]
$\mathbb{E}\big [T_j\big]$ does not have a simple form in these trees.

The tree of a pure birth process is another example which has detail in this paper.

\begin{theorem}\label{thm:2}
The probability of a sample configuration in a general exchangeable binary tree satisfies the following.
 As $\theta \to 0$, for $a\ne b \in \{1,\ldots,d\}$,
\begin{eqnarray}
p(n\bm{e}_a;\theta) &=& \pi_a\Big (1 - \frac{\theta}{2}(1-P_{aa})\sum_{l=2}^{n}l\mathbb{E}\big [T_l\big ]\Big ) + \mathcal{O}(\theta^2)
\label{thm:ga} \\
&=&\pi_a\Big (1 + \gamma_{aa}\sum_{l=2}^{n}l\mathbb{E}\big [T_l\big ]\Big ) + \mathcal{O}(\theta^2)
\end{eqnarray}
\begin{eqnarray}
p(n_a\bm{e}_a+n_b\bm{e}_b;\theta) &= &
\frac{\theta}{2}\frac{(n_a-1)!(n_b-1)!}{(n-1)!}
\nonumber \\
&&~~\times
\Bigg (\pi_aP_{ab}\sum_{k=2}^{n_a+1}{n-k\choose n_b-1}k(k-1)\mathbb{E}\big [T_k\big ]
\nonumber \\
&&~~ +
\pi_bP_{ba}\sum_{k=2}^{n_b+1}{n-k\choose n_a-1}k(k-1)\mathbb{E}\big [T_k\big ]\Bigg ) + \mathcal{O}(\theta^2)
\nonumber \\
\label{thm:gb} \\
p(\bm{n};\theta) &=& \mathcal{O}(\theta^2)\text{~if~}\bm{n}\text{~has~}>2\text{~non-zero~entries}
\label{thm:gc}
\end{eqnarray}
\end{theorem}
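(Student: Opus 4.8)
The plan is to run the coalescent argument from the proof of Theorem~\ref{thm:1} essentially verbatim, the only substantive change being that the Kingman-specific value $\mathbb{E}[T_l]=\binom{l}{2}^{-1}$ is no longer inserted but kept symbolic, while the combinatorial ingredients -- in particular the P\'olya-urn probability $p_{nk}(c)$ of the preliminary Lemma -- are unchanged, because in an exchangeable binary tree the topology is the same P\'olya-urn topology as in the Kingman coalescent and is independent of the edge lengths $T_2,\ldots,T_n$. Concretely, I would first condition on $T_2,\ldots,T_n$ and place mutations on the $k$ edges present during $T_k$ as independent Poisson processes of rate $\theta/2$, so that, writing $S=\sum_{l=2}^n lT_l$, the conditional probability of no mutation anywhere is $e^{-\theta S/2}$ and the conditional probability of exactly one mutation, occurring while there are $k$ edges, is $\tfrac{\theta}{2}kT_k e^{-\theta S/2}=\tfrac{\theta}{2}kT_k+\mathcal{O}(\theta^2)$.

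For the monomorphic configuration $n\bm e_a$, to order $\theta$ the only contributions are ``ancestor $a$, no mutation'' and ``ancestor $a$, one self-mutation $a\to a$ on some edge'' (one mutation $a\to c$ with $c\ne a$ necessarily produces a leaf of type $c$, and two mutations cost $\mathcal{O}(\theta^2)$). Taking expectations and expanding, these sum to $\pi_a\bigl(1-\tfrac{\theta}{2}\mathbb{E}[S]\bigr)$ and $\pi_a P_{aa}\tfrac{\theta}{2}\sum_{k=2}^n k\,\mathbb{E}[T_k]$, giving (\ref{thm:ga}); the identity $\gamma_{aa}=\tfrac{\theta}{2}(P_{aa}-1)$ gives the second form. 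For $n_a\bm e_a+n_b\bm e_b$, exactly one mutation is required to order $\theta$; with a type-$a$ ancestor it must be an $a\to b$ event, and if it falls on an edge present while there are $k$ edges, that edge must subtend exactly $n_b$ of the $n$ leaves, which by the preliminary Lemma has the deterministic probability $p_{nk}(n_b)$ with $2\le k\le n_a+1$. Hence the type-$a$-ancestor contribution is $\tfrac{\theta}{2}\pi_a P_{ab}\sum_{k=2}^{n_a+1}k\,p_{nk}(n_b)\,\mathbb{E}[T_k]+\mathcal{O}(\theta^2)$, and the elementary identity $k\,p_{nk}(n_b)=\frac{(n_a-1)!(n_b-1)!}{(n-1)!}\,k(k-1)\binom{n-k}{n_b-1}$ turns this into the first bracketed sum in (\ref{thm:gb}); the type-$b$-ancestor case gives the second sum by the symmetry $a\leftrightarrow b$. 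Configurations with more than two non-zero entries require at least two mutations in the tree, since each mutation on a lineage can introduce at most one new type below an ancestor of a single type, and the conditional probability of two or more mutations is $\mathcal{O}\bigl(\theta^2(S+S^2)\bigr)$, which gives (\ref{thm:gc}).

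The one point that genuinely needs care -- and the only place where the argument is not a literal transcription of the proof of Theorem~\ref{thm:1} -- is the uniformity of the $\mathcal{O}(\theta^2)$ remainders after integrating out the (possibly dependent) $T_l$: this requires $\mathbb{E}[S]<\infty$ and $\mathbb{E}[S^2]<\infty$ for fixed $n$, which is the implicit regularity hypothesis on the ``general distribution'' of the coalescence times and is what licenses interchanging the $\theta$-expansions of $e^{-\theta S/2}$ and $\tfrac{\theta}{2}kT_k e^{-\theta S/2}$ with the expectation. I expect no further obstacle; the structural fact doing the work is simply that $p_{nk}(c)$ depends only on the exchangeable combinatorial structure and not on the law of $(T_2,\ldots,T_n)$, so every step of the Kingman proof that used the topology survives unchanged and every step that used $\mathbb{E}[T_l]=\binom{l}{2}^{-1}$ is just left in terms of $\mathbb{E}[T_l]$.
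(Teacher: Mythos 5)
Your proposal is correct and is essentially the paper's own argument: the paper proves Theorem~\ref{thm:2} simply by noting that inspection of the proof of Theorem~\ref{thm:1} carries over, since the P\'olya-urn topology (and hence $p_{nk}(c)$) is unchanged while $\mathbb{E}[T_l]$ is left symbolic rather than set to $\binom{l}{2}^{-1}$. Your identity $k\,p_{nk}(n_b)=\frac{(n_a-1)!(n_b-1)!}{(n-1)!}k(k-1)\binom{n-k}{n_b-1}$ checks out, and your observation that finite second moments of the $T_l$ are needed for the uniform $\mathcal{O}(\theta^2)$ remainder is a legitimate (if implicit in the paper) regularity point.
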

\begin{example} 
\noindent
\emph{Pure birth process tree}

In a pure birth process $\{X_t\}_{t\geq 0}$ of counts of individuals in continuous time individuals split at rate $\lambda_x$ when $X_t=x$.  In this example $x_0=1$ and we take the type of the initial individual to be $a$ with probability $\pi_a$. The type of individuals is defined by mutations occurring in the tree at rate $\theta/2$ and transitions according to $P$. The tree is an exchangeable binary tree with $T_2,\ldots, T_N$ independent exponential random variables with rates $\lambda_2,\ldots,\lambda_n$.
\begin{corollary}
In a pure birth process where individuals reproduce independently at rate $\lambda$, $\lambda_n=n\lambda$ and
\begin{eqnarray}
p(n\bm{e}_a;\theta) &=& \pi_a\Big (1 - (n-1)\frac{\theta}{2\lambda}(1-P_{aa})\Big ) + \mathcal{O}(\theta^2)
\nonumber \\
p(n_a\bm{e}_a+n_b\bm{e}_b;\theta) &= &
\frac{\theta}{2\lambda}\Bigg (\pi_aP_{ab}\frac{n}{n_b(n_b+1)} + \pi_bP_{ba}\frac{n}{n_a(n_a+1)}\Bigg ) + \mathcal{O}(\theta^2).
\nonumber \\
\label{pure:0}
\end{eqnarray}
\end{corollary}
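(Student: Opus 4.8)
The plan is to obtain the Corollary as a direct specialization of Theorem \ref{thm:2}: one only needs to insert the moments $\mathbb{E}[T_l]$ belonging to this particular tree. Since $T_2,\dots,T_n$ are independent exponential random variables with rates $\lambda_l=l\lambda$, we have $\mathbb{E}[T_l]=1/(l\lambda)$ for $l=2,\dots,n$, and the whole argument reduces to evaluating the two finite sums appearing in (\ref{thm:ga}) and (\ref{thm:gb}) under this choice.

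For the monomorphic configuration, $\sum_{l=2}^{n}l\,\mathbb{E}[T_l]=\sum_{l=2}^{n}\frac{1}{\lambda}=\frac{n-1}{\lambda}$, so substituting into (\ref{thm:ga}) gives $p(n\bm{e}_a;\theta)=\pi_a\bigl(1-(n-1)\tfrac{\theta}{2\lambda}(1-P_{aa})\bigr)+\mathcal{O}(\theta^2)$ at once.

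For the two-type configuration $n_a\bm{e}_a+n_b\bm{e}_b$, the factor $k$ in (\ref{thm:gb}) cancels against the $k$ in the denominator of $\mathbb{E}[T_k]=1/(k\lambda)$, leaving
\[
\sum_{k=2}^{n_a+1}{n-k\choose n_b-1}k(k-1)\mathbb{E}[T_k]=\frac{1}{\lambda}\sum_{k=2}^{n_a+1}(k-1){n-k\choose n_b-1}.
\]
I would evaluate the weighted sum on the right by the coefficient-extraction device already used above for its unweighted analogue: writing ${n-k\choose n_b-1}=[z^{n_b-1}](1+z)^{n-k}$ and summing the arithmetico-geometric series $\sum_{k=2}^{n_a+1}(k-1)(1+z)^{n-k}$ in closed form, one finds the sum equals the coefficient of $z^{n_b+1}$ in $(1+z)^n-(n_a+1)(1+z)^{n_b}+n_a(1+z)^{n_b-1}$, which is ${n\choose n_b+1}$ because the last two terms have degree less than $n_b+1$. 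The prefactor then collapses, $\tfrac{(n_a-1)!(n_b-1)!}{(n-1)!}{n\choose n_b+1}=\tfrac{n}{n_b(n_b+1)}$, and the symmetric sum with $n_a$ and $n_b$ interchanged contributes $\tfrac{n}{n_a(n_a+1)}$, which together yield (\ref{pure:0}).

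The only step that is not pure substitution is the closed-form evaluation of $\sum_{k=2}^{n_a+1}(k-1){n-k\choose n_b-1}$. This is a routine binomial identity, entirely parallel to $\sum_{k=2}^{n_a+1}{n-k\choose n_b-1}={n-1\choose n_b}$ from the proof of Theorem \ref{thm:1}, and it is the natural place to be careful with index shifts; beyond that bookkeeping I expect no difficulty.
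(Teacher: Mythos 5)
Your proposal is correct and follows the same route as the paper: substitute $\mathbb{E}[T_k]=1/(k\lambda)$ into Theorem \ref{thm:2} and reduce everything to the single binomial identity $\sum_{k=2}^{n_a+1}(k-1){n-k\choose n_b-1}={n\choose n_b+1}$. The only (immaterial) difference is that you establish this identity by coefficient extraction from the arithmetico-geometric series $\sum_{k=2}^{n_a+1}(k-1)(1+z)^{n-k}=\bigl((1+z)^n-(n_a+1)(1+z)^{n_b}+n_a(1+z)^{n_b-1}\bigr)/z^2$, whereas the paper writes $k-1=n-(n-k+1)$, absorbs the factor into the binomial coefficient, and telescopes; both yield the same closed form $\frac{(n-1)!}{(n_a-1)!(n_b-1)!}\cdot\frac{n}{n_b(n_b+1)}$.
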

\begin{proof}
Note that 
$
\mathbb{E}\big [T_j\big ] = (\lambda j)^{-1}.
$
Calculation of $p(n\bm{e}_a;\theta)$ is elementary. 
Calculation of $p(n_a\bm{e}_a+n_b\bm{e}_b;\theta)$ depends on the identity
\begin{eqnarray*}
\sum_{k=2}^{n_a+1}{n-k\choose n_b-1}(k-1)
&=& \sum_{k=2}^{n_a+1}\Bigg (n{n-k\choose n_b-1} - (n-k+1){n-k\choose n_b-1}\Bigg )
\nonumber \\
&=& \sum_{k=2}^{n_a+1}\Bigg (n{n-k\choose n_b-1} - n_b{n-k+1\choose n_b}\Bigg )
\nonumber \\
&=& n{n-1\choose n_b} - n_b{n\choose n_b+1}
\nonumber \\
&=& \frac{(n-1)!}{(n_a-1)!(n_b-1)!}\cdot \frac{n}{n_b(n_b+1)}.
\end{eqnarray*}
and a similar identity with $a$ and $b$ interchanged. The principle is similar to that in the proof of Theorem \ref{thm:1}.
\end{proof}
\end{example}
\section{Discussion}
An approximation to the sampling distribution in the stationary distribution of a Wright-Fisher diffusion model with general mutation rates has been found in Theorem \ref{thm:1}. The coalescent process is a dual process to the diffusion and this is exploited to show that the approximation is equivalent to considering at most one mutation in the coalescent tree. \citet{BT2016,BT2017} have previously derived approximate expressions for the stationary distribution and the sampling distribution using a probability flux argument. The coalescent argument in this paper provides a neat proof and a probabilistic insight into the approximation.
An approximation to the stationary distribution in the diffusion is characterized by the sampling approximations.
The coalescent argument is extended to general random exchangeable binary trees in Theorem \ref{thm:2}, for example coalescent trees in a variable population size model, or pure birth process trees. 
\begin{acknowledgements}
This research was done when Robert Griffiths visited the Mathematical Sciences Instutite, Australian National University in November and December 2017. He thanks the Instutite for their support and hospitality.
\end{acknowledgements}


\begin{thebibliography}{99}

  \bibitem[Bhaskar, Kamm and Song(2012)]{BKS2012}
   Bhaskar, A., Kamm, J. A., Song, Y. S. (2012). Approximate sampling formulae for general finite-alleles models of mutation.  Adv.  Appl. Probab. 44,  408--428.


\bibitem[Burden and Tang(2016)]{BT2016} Burden, C. J. and Tang, Y. (2016).  An approximate stationary solution for multi-allele diffusion with low mutation rates. Theor. Popul. Biol. 112, 22--32.

\bibitem[Burden and Tang(2017)]{BT2017} Burden, C. J. and Tang, Y. (2017). Rate matrix estimation from site frequency data.
Theor. Popul. Biol. 113, 23--33.
%
\bibitem[Burden and Griffiths(2018)]{BG2018} Burden, C. J. and Griffiths, R. C. (2018). 
Stationary distribution of a 2-island 2-allele Wright-Fisher diffusion model with slow mutation and migration rates.  arXiv:1802.07415

\bibitem[De Maio, Schrempf, Kosiol(2015)]{DSK2015}
De Maio, N, Schrempf, D., Kosiol,C. (2015). PoMo: An allele frequency based approach for species tree estimation. Syst. Biol. 64, 1018--1031.

\bibitem[De Iorio and Griffiths(2004)]{DG2004}
De Iorio, M. and Griffiths, R. C. (2004). Importance sampling on coalescent histories. I. Adv. Appl. Prob. 36, 417--433. 

\bibitem[Etheridge(2011)]{E2011}
Etheridge, A. (2011). Some Mathematical Models from Population Genetics: {\'E}cole D\'{\'E}t{\'e} de Probabilités de Saint-Flour XXXIX-2009, Springer-Verlag, Berlin, Heidelberg.


\bibitem[Griffiths and Tavar{\'e}(1988)]{GT1988}
Griffiths, R.C. and Tavar{\'e}, S. (1998). The age of a mutation in a general coalescent tree. Stochastic Models. 14 273--295. 

\bibitem[Griffiths and Tavar{\'e}(2003)]{GT2003}
Griffiths, R.C. and Tavar{\'e}, S. (2003). The genealogy of a neutral mutation. In: Green, P.J., Hjort, N.L. and Richardson, S. (Eds.), Highly Structured Stochastic Systems. Oxford University Press, Oxford, pp. 393--413. 

 \bibitem[Jenkins and Song(2010)]{JS2010}
   Jenkins, P. A., Song, Y. S. (2010). An asymptotic sampling formula for the coalescent with recombination. Ann. Appl. Probab., 20, 1005--1028.
   
    \bibitem[Jenkins and Song(2011)]{JS2011}
   Jenkins, P. A., Song,  Y. S. (2011). The effect of recurrent mutation on the frequency spectrum of a segregating site and the age of an allele, Theor. Popul. Biol., 80, 158--173.


\bibitem[Kimura(1969)]{K1969} 
Kimura, M. (1969) The number of heterozygous nucleotide sites maintained in a finite population due to steady flux of mutations. Genetics 61 893--903. 

\bibitem[Kingman(1982)]{K1982} Kingman, J.F.C. (1982). The coalescent.
Stochastic Process. Appl. 13, 235--248.

\bibitem[RoyChoudhury and Wakeley(2010)]{RW2010} RoyChoudhury, A. and Wakeley, J. (2010). Sufficiency of the number of segregating sites in the limit under finite-sites mutation.  
Theor. Popul. Biol. 78, 118--122.


\bibitem[Schrempf and Hobolth(2017)]{SH2017}
Schrempf, D. and Hobolth, A. (2017). An alternative derivation of the stationary distribution of the multivariate neutral Wright-Fisher model for low mutation rates with a view to mutation rate estimation from site frequency data. 
Theor. Popul. Biol. 114, 88--94.

\bibitem[Vogl and Clemente(2012)]{VC2012} Vogl, C., Clemente, F. (2012). The allele-frequency spectrum in a decoupled Moran model with mutation, drift, and 
directional selection, assuming small mutation rates
{Theor. Popul. Biol.} 81, 197--209.

\bibitem[Vogl(2014)]{V2014} Vogl, C. (2014). Estimating the scaled mutation rate and mutation bias with site frequency data.
{Theor. Popul. Biol.} 98, 19--27.

\bibitem[Vogl and Bergman(2015)]{VB2015} Vogl, C., Bergman J. (2015). Inference of directional selection and mutation parameters assuming equilibrium.
{Theor. Popul. Biol.} 106, 71--82.

\bibitem[Zeng(2010)]{Z2010}
Zeng, K. (2010) A simple multiallele model and its application to identifying preferred-unpreferred codons using polymorphism data.  {Mol. Biol. Evol.} 27 1327--1337.
%

\end{thebibliography}
\end{document}